\documentclass[submission,copyright,creativecommons,fleqn]{eptcs}
\pagenumbering{arabic}
\usepackage{graphicx} 
\usepackage{amssymb}
\usepackage{color}
\usepackage{amsmath}
\usepackage{amsthm}
\usepackage{bm}
\usepackage{bbm}
\usepackage{subfig}
\newtheorem{theorem}{Theorem}[section]
\newtheorem{definition}{Definition}

\usepackage{iftex}

\ifpdf
  \usepackage{underscore}         
  \usepackage[T1]{fontenc}        
\else
  \usepackage{breakurl}           
\fi

\title{A Function-Set Framework:\\General Properties and Applications to Modal Logic}
\author{Luke Bayzid
\qquad\qquad Alexandre Madeira \qquad\qquad Manuel A. Martins
\institute{Mathematics Department of University of Aveiro \\
CIDMA - Research Center in Mathematics and Applications, University of Aveiro, Portugal}
\email{luke.adrian@ua.pt \quad madeira@ua.pt \quad\qquad martins@ua.pt}
}
\begin{document}
\setcounter{page}{1}
\maketitle
\newtheorem{ex}{Example}
\begin{abstract}
    Representations are essential to mathematically model phenomena, but there are many options available. While each of those options provides useful properties with which to solve problems related to the phenomena in study, comparing results between these representations can be non-trivial, as different frameworks are used for different contexts. We present a general structure based on set-theoretic concepts that accommodates many situations related to logical and semantic frameworks. We show the versatility of this approach by presenting alternative constructions of modal logic; in particular, all modal logics can be represented within the framework.
\end{abstract}

\section{Introduction}

When attempting to model a phenomenon, we have access to a variety of different representations, many in completely distinct areas of mathematics, computer science, and beyond \cite{SpecBook,ZBook}. This diversity of models is helpful because the particularity of each representation grants useful properties to study appropriate objects. However, objects represented in distinct areas can be hard to compare or program in a standardized form, especially if we wish to do so constructively. 

The main motivation for this short paper, which is part of the first author's master's thesis \cite{Lukethesis}, is to define a framework based on entities and time, applying it to modal logic thereafter. Our goal, as such, was to generalize what these concepts mean while still capturing broad properties. Since set theory is a well-studied area of mathematics with a wide reach, we chose it as the basis.

Before formally defining the system, we believe that it is important to slowly build up the intuition that led to the creation thereof. For this, we can start with a practical example and generalize it until we arrive at our system.

Let us suppose that we wish to represent a town of people who go to work during certain parts of the day. Already, we have made several assumptions — that there is a notion of time, actions, and people. These three concepts form the foundation of our system. In fact, any question about the described collection can be seen as a question about an entity's state at a point in time; if we wished to include the environment, it, too, could be considered an entity or merely something encoded within the states.

Now that we have extracted the objects of our system, we should formalize them and give them properties: Time is typically treated as the set of natural numbers, but the most important aspect is that it is totally ordered. States encode the information that we wish to study, and as that can vary significantly, we cannot particularize them more than just to say that they form a set. Since the semantics of the system are encoded within the state, the only important part about entities is that they are distinguishable; as such, we only need to store a set of cardinals — an indexing system, much like a computer's memory addresses. It should be noted that even if we do not care about encoding time, we can make it a singleton.

It should now be clear that we are a approaching a temporal, entity-based indexing system that points to the semantics of an object at a particular moment. This is not the only generalization of the above, but it is the one that we followed. With these intuitions set, we will now formally define the above system in Section~ \ref{Framework}, use the definition to describe a common property in Section~\ref{Determinability}, and apply the framework to modal logic in Section~\ref{Modal Context}, where we prove that modal logic can, in a sense, be seen as a particularization thereof in Theorem~\ref{Main}. We provide relevant future work to show the framework's direction in Section~\ref{Conclusion}.

\section{Framework}\label{Framework}

Let $S\not=\emptyset$ be a set of states, $E\not=\emptyset$ be a set of entities, and $T\not=\emptyset$ be a totally ordered set. Then, we call $\Omega\subseteq S^{E\times T}$ a \emph{context}, which is the primary object of study; an element of a context is called an \emph{instance}. Intuitively, each instance represents a possible timeline, where its corresponding context represents the set of all possible timelines. As is commonly known, we also have that $S^{E\times T}\cong {S^E}^T\cong {S^T}^E$, which will be useful later. In this context, given a $\omega\in S^{E\times T}$ (i.e., a function $\omega: E\times T \to S$), we write
\begin{itemize}
    \item $\omega^*$ for the corresponding element of ${S^E}^T$ (i.e., the function $\omega^*:T \to (E \to S)$) and
    \item $\omega^@$ for the corresponding element of ${S^T}^E$ (i.e., the function $\omega^@:E \to (T \to S)$),
\end{itemize} whereas a subscript of the same type is used to take an element of ${S^E}^T$ or ${S^T}^E$ and turn it into one of $S^{E\times T}$, such as $\omega_*$ or $\omega_@$.
\begin{ex}
Let us consider a trivial example to illustrate the notation — Alice and Bob live together, and it is known that when Bob is at home, Alice will be at home an hour later. We can define the aforementioned situation as the following:\begin{center}
$S=\{\mathrm{Home}, \mathrm{Out}\}$;\\
$E=\{\mathrm{Alice}, \mathrm{Bob}\}$;\\
$T=\mathbb{N}$; and\\
$\Omega=\{\omega\in S^{E\times T}\ |\ \forall t\in T, \omega(\mathrm{Bob}, t)=\mathrm{Home}\Rightarrow \omega(\mathrm{Alice}, t+1)=\mathrm{Home}\}$.
\end{center}
If we then wished to add that Bob will be at home during odd hours, we could write the following:\begin{center}
    $\Omega'= \{\omega\in \Omega\ |\ \forall t\in T, t\ \mathrm{mod}\ 2 = 1\Rightarrow \omega(\mathrm{Bob}, t)=\mathrm{Home}\}$.
\end{center}
\end{ex}
\begin{definition}
    When analyzing an instance up to a particular point, it is typically useful to know its neighborhood — similar possibilities permitted by the context. For this, we use the \emph{consistency context} of $\tilde\omega$ up to $t$, which is defined as follows:
\begin{center}
    $\mathrm{C}^{\tilde\omega, t}(\Omega)=\{ \omega\in\Omega\ |\ \forall e\in E, \forall t'\in T, t'\leq t \Rightarrow \omega(e, t')=\tilde\omega(e, t')\}$.
\end{center}
\end{definition}
Intuitively, the above definition captures all of the instances that agree with $\tilde\omega$ up to $t$; a similar function can be made if $t$ is meant to be excluded.
\subsection{Determinability}\label{Determinability}
Determinism is an important concept in philosophy \cite{Baumeister2022}, computer science \cite{CSDeterminism}, and physics \cite{vanStrien2021}, as it refers to the determination of the future by using past information. This can be seen as a subset of determining all specific futures that are consistent with past information, which is usually called indeterminism. \begin{definition}
    We model both concepts with contexts under the term \emph{determinability}, which is done using the following:
\begin{eqnarray}
\forall \omega, \omega'\in\Omega,\forall t,t'\in T,\exists\tau\in\mu(t^+, t'^+), \\
\omega^*(t)=\omega'^*(t')\Rightarrow\beta(\tau)\land\{\tilde\omega^*|_{t^+}\ |\ \tilde\omega\in\mathrm{C}^{\omega, t}(\Omega)\}=\{\tilde\omega^*|_{t'^+}\circ\tau \ |\ \tilde\omega\in\mathrm{C}^{\omega', t'}(\Omega)\},
\end{eqnarray}
where $t^+ = \{t'\in T\ |\ t'\geq t \}$, where $\mu(A, B)$ is the set of all monotonous functions from $A$ to $B$, and where $\beta(\tau)$ qualifies $\tau$ as a bijection. 
\end{definition}
\noindent
Here, the idea is to say that if two instances have the same state at certain moments, then the futures generated by them are the same (i.e., consistent). In particular, if there is only one future at each point (possibly excluding the minimum of $T$ if it exists), we say that the context is deterministic.
\begin{figure}[h]
        \centering
        \subfloat[Consistency Context]{\includegraphics[width=0.25\linewidth]{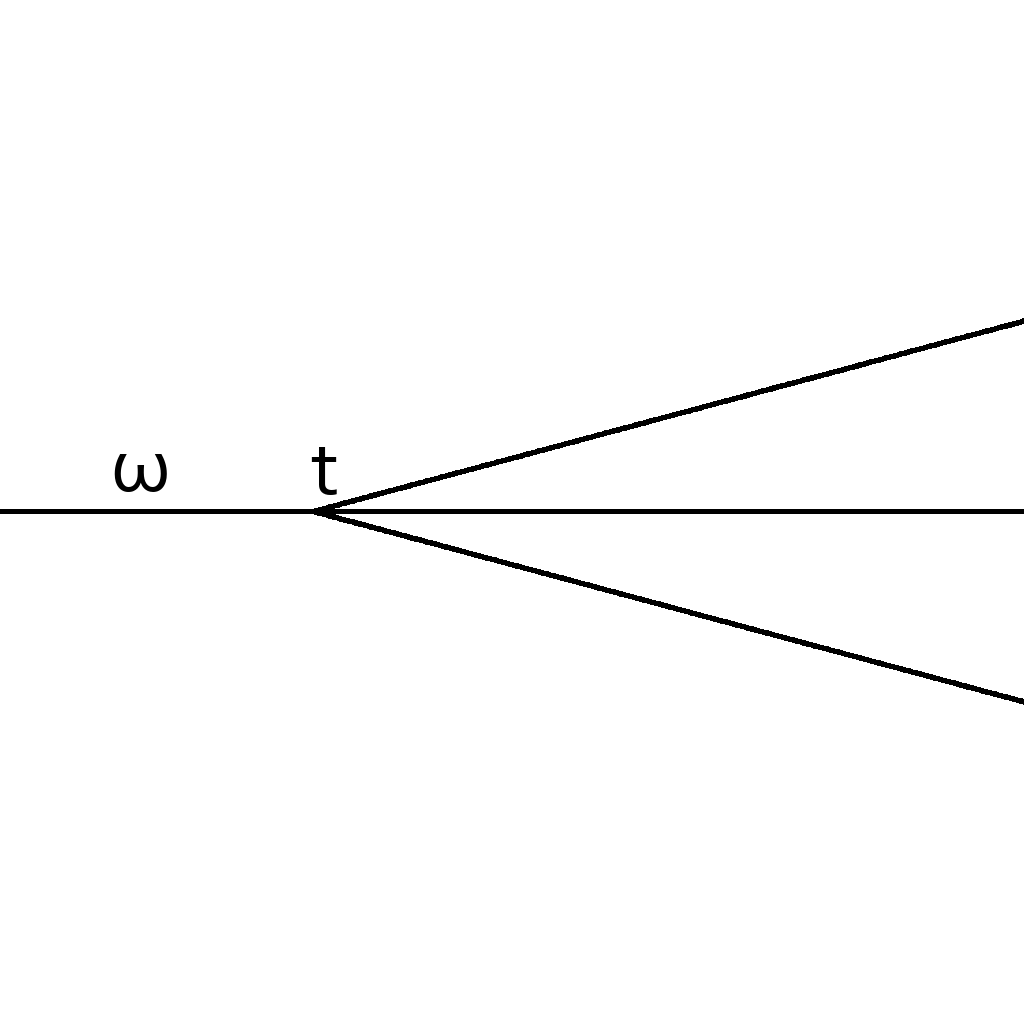}}\hspace{2cm}
        \subfloat[Determinability]{\includegraphics[width=0.25\linewidth]{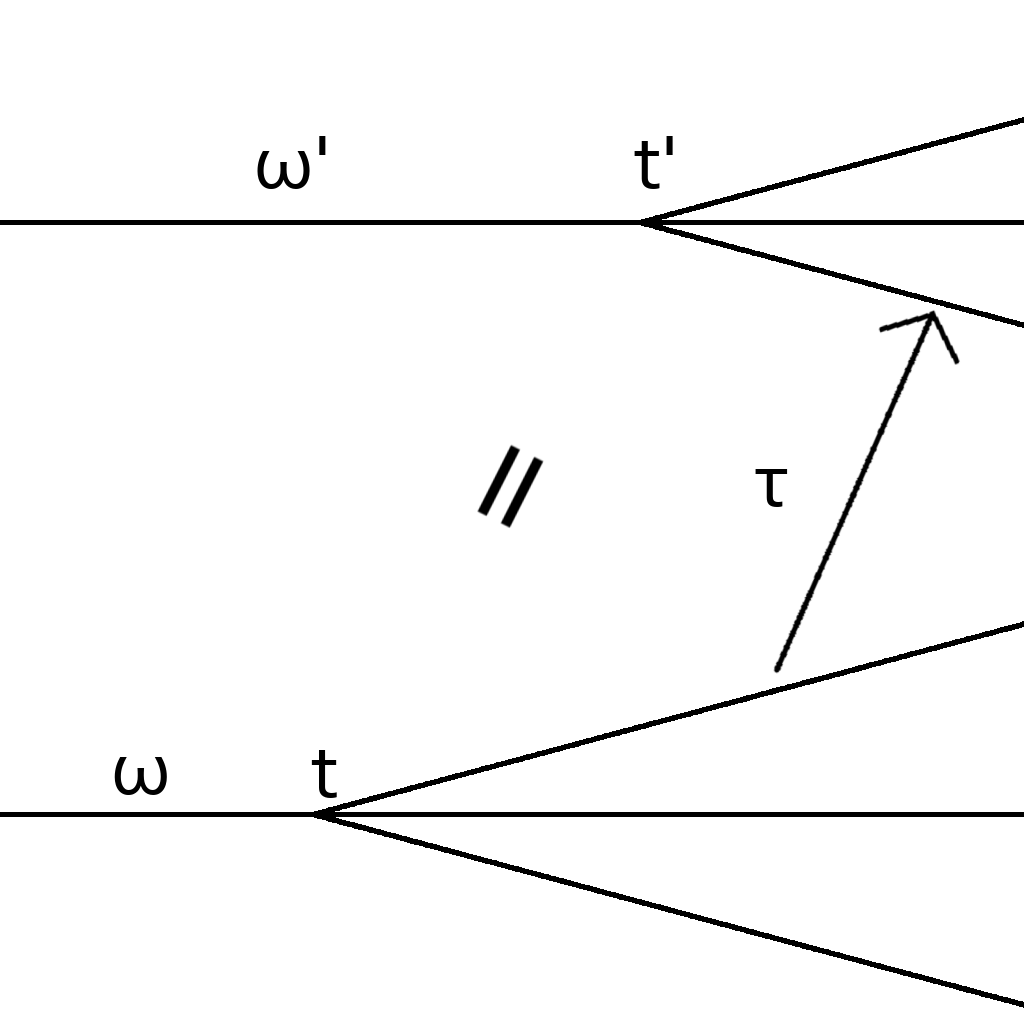}}
\end{figure}
\begin{ex}
Let us consider $\mathrm{Chess}\subseteq (({\mathbf{8}^2}\cup\mathbf{1})^{16})^{\mathbf{2}\times\mathbb{N}}$. The idea is to encode chess as a function that keeps track of the pieces ($16$) of each player ($\mathbf{2}$), marking them on the board ($\mathbf{8}^2$) or outside of it ($\mathbf{1}$) over $\mathbb{N}$. With this encoding, chess is not determinable. One reason is that we cannot know who is next by looking at the state of the board. To amend that, we could include another entity that keeps track of the turns; alternatively, we could extend the states, such as $(({\mathbf{8}^2}\cup\mathbf{1})^{16}\times\mathbf{2})^{\mathbf{2}\times\mathbb{N}}$, such that a player can only make their turn if the second coordinate (the added portion) is one, which would alternate as turns progress. Beyond this, certain rules of chess might require similar changes to grant determinability.
\end{ex}
\paragraph{Determinability With Countable Time.}
In contexts with countable time, we can actually use a slightly different but intuitive understanding of determinability. In a sense, we expect any countable, determinable system to be describable by a function that iterates the present into the future. 
\begin{definition}
    Let $A=\{\omega^*(t)\ |\ (t, \omega)\in T\times\Omega\}$. We say that $\Omega$ has an iterator whenever $\exists i\in \mathrm{P}(A)^A,\forall\omega\in\Omega,\forall t\in T, (C^{\omega, t}(\Omega))^*(t+1)=i(\omega^*(t))$.
\end{definition}
\begin{theorem}
     Let $T$ be countable. Let $\Omega\subseteq S^{E\times T}$ be determinable. Then, $\Omega$ has an iterator.
\end{theorem}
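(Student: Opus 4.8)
The plan is to construct the iterator directly as the map sending a global state $a\in A$ to the set of global states that can occur one step after it. Since every $a\in A$ is, by definition of $A$, of the form $\omega^*(t)$ for some $(t,\omega)\in T\times\Omega$, I would set $i(a)=\{\tilde\omega^*(t+1)\mid\tilde\omega\in\mathrm{C}^{\omega,t}(\Omega)\}=(\mathrm{C}^{\omega,t}(\Omega))^*(t+1)$. Granting that this is independent of the chosen representative $(t,\omega)$, the map $i$ is then automatically total on $A$, it lands in $\mathrm{P}(A)$ because $\tilde\omega\in\Omega$ and $t+1\in T$ force $\tilde\omega^*(t+1)\in A$ (and $\omega\in\mathrm{C}^{\omega,t}(\Omega)$ shows $i(a)$ is in fact nonempty), and the required identity $(\mathrm{C}^{\omega,t}(\Omega))^*(t+1)=i(\omega^*(t))$ holds by construction, taking $(t,\omega)$ itself as the representative of $\omega^*(t)$. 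So the theorem reduces entirely to well-definedness.

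The well-definedness claim is: if $\omega^*(t)=\omega'^*(t')$ for $\omega,\omega'\in\Omega$ and $t,t'\in T$, then $\{\tilde\omega^*(t+1)\mid\tilde\omega\in\mathrm{C}^{\omega,t}(\Omega)\}=\{\tilde\omega^*(t'+1)\mid\tilde\omega\in\mathrm{C}^{\omega',t'}(\Omega)\}$. This is exactly what determinability is for. Applying the determinability condition to $\omega,\omega',t,t'$ produces a $\tau\in\mu(t^+,t'^+)$ with $\beta(\tau)$ and $\{\tilde\omega^*|_{t^+}\mid\tilde\omega\in\mathrm{C}^{\omega,t}(\Omega)\}=\{\tilde\omega^*|_{t'^+}\circ\tau\mid\tilde\omega\in\mathrm{C}^{\omega',t'}(\Omega)\}$. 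Since $\tau$ is a monotone bijection between the totally ordered sets $t^+$ and $t'^+$, it is an order isomorphism; it therefore carries the least element $t$ of $t^+$ to the least element $t'$ of $t'^+$, and, using that $T$ is closed under the successor $s\mapsto s+1$ presupposed by the $+1$ notation, it carries the immediate successor $t+1$ of $t$ to the immediate successor $t'+1$ of $t'$, i.e. $\tau(t+1)=t'+1$.

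With $\tau(t+1)=t'+1$, I would finish by evaluating the above equality of sets of functions at the point $t+1\in t^+$: the left-hand side yields $\{(\tilde\omega^*|_{t^+})(t+1)\mid\tilde\omega\in\mathrm{C}^{\omega,t}(\Omega)\}=\{\tilde\omega^*(t+1)\mid\tilde\omega\in\mathrm{C}^{\omega,t}(\Omega)\}$, while the right-hand side yields $\{(\tilde\omega^*|_{t'^+}\circ\tau)(t+1)\mid\tilde\omega\in\mathrm{C}^{\omega',t'}(\Omega)\}=\{\tilde\omega^*(t'+1)\mid\tilde\omega\in\mathrm{C}^{\omega',t'}(\Omega)\}$. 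Equating these proves the claim, hence $i$ is well-defined, and by the first paragraph $i$ is the desired iterator.

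The main obstacle is the well-definedness step, and inside it two small but essential points: that a monotone bijection between the tails $t^+$ and $t'^+$ is forced to match up successors — this is where the discreteness of $T$ carried by the $+1$ notation genuinely bites — and that one may pass from an equality of sets of functions on $t^+$ to an equality of their sets of values at $t+1$. Everything after that (totality of $i$, its codomain being $\mathrm{P}(A)$, and the iterator equation) is immediate from the definition of $A$ and of $\mathrm{C}^{\omega,t}(\Omega)$.
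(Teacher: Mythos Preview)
Your proof is correct and follows essentially the same approach as the paper: define $i$ via a representative $(t,\omega)$ with $\omega^*(t)=a$, reduce everything to well-definedness, invoke determinability to obtain the monotone bijection $\tau$, argue that $\tau$ sends $t\mapsto t'$ and $t+1\mapsto t'+1$, and then evaluate the equality of function-sets at $t+1$. If anything, you are a bit more explicit than the paper about why $i$ lands in $\mathrm{P}(A)$ and about the passage from equality of sets of functions to equality of their values at a point.
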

\begin{proof}
    
Let us say that $i$ is a binary relation of the form $A\times P(A)$ such that
\[(a, b)\in i\Rightarrow a\in\Omega^*(t)\land b=(C^{\omega, t}(\Omega))^*(t+1)\]
where $\omega\in\Omega$ such that $\omega^*(t)=a$ for some $t\in T$. We will show that $i$ really is a function, for which we must prove that if $a=a'$ in $(a, b)\in i\land(a', b')\in i$, then $b=b'$. Since $a=a'$, we have that $\exists\omega,\omega'\in\Omega, \exists t,t'\in T,\omega^*(t)=a=a'=\omega'^*(t')$. Given determinability, we know that $\exists\tau\in\mu(t^+,t'^+),$
\begin{center}
    $\beta(\tau)\land\{\tilde\omega^*|_{t^+}\ |\ \tilde\omega\in\mathrm{C}^{\omega, t}(\Omega)\}=\{\tilde\omega^*|_{t'^+}\circ\tau \ |\ \tilde\omega\in\mathrm{C}^{\omega', t'}(\Omega)\}$.
\end{center}
\noindent
By bijective monotony over a total order, since $t$ is the minimum of $t^+$, it must also get mapped by $\tau$ to the minimum of $t'^+$, which is $t'$, for which the same applies to $t+1$ and $t'+1$ by virtue of bijectivity and countability. As such, we can particularize the above to 
\begin{center}
    $\{\tilde\omega^*(t+1)\ |\ \tilde\omega\in\mathrm{C}^{\omega, t}(\Omega)\}=\{\tilde\omega^*(t'+1) \ |\ \tilde\omega\in\mathrm{C}^{\omega', t'}(\Omega)\}$,
\end{center}
which is to say that $(C^{\omega, t}(\Omega))^*(t+1)=(C^{\omega', t'}(\Omega))^*(t'+1)$. Therefore, $b=b'$, meaning that $i$ truly is a function.
\end{proof}
What is slightly harder to prove is that the converse is true, meaning that having an iterator is equivalent to being determinable in contexts with countable time, which confirms our intuitions.
\begin{theorem}
     Let $T$ be countable. Let $\Omega\subseteq S^{E\times T}$ have an iterator. Then, $\Omega$ is determinable.
\end{theorem}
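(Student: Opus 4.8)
The plan is to pin down a canonical reindexing $\tau$, reduce the theorem to a single transport-of-futures statement, and then build the witnessing instance by iterating $i$. Fix $\omega,\omega'\in\Omega$ and $t,t'\in T$. If $\omega^*(t)\neq\omega'^*(t')$ the implication in the definition of determinability is vacuous, so any element of $\mu(t^+,t'^+)$ will do (the set is non-empty, containing the constant maps). Assume then $\omega^*(t)=\omega'^*(t')=:a$. In the countable-time setting $t^+=\{t,t+1,t+2,\dots\}$ and $t'^+=\{t',t'+1,\dots\}$ are well-orders of the same order type; take $\tau\colon t^+\to t'^+$ to be the unique order isomorphism, $\tau(t+n)=t'+n$, so $\beta(\tau)$ holds. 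It now suffices to prove, for every such pair, the single inclusion $\{\tilde\omega^*|_{t^+}\mid\tilde\omega\in\mathrm{C}^{\omega,t}(\Omega)\}\subseteq\{\tilde\omega^*|_{t'^+}\circ\tau\mid\tilde\omega\in\mathrm{C}^{\omega',t'}(\Omega)\}$: applying it with the two triples interchanged ($\tau$ replaced by $\tau^{-1}$) and post-composing with $\tau$ yields the reverse inclusion, hence equality.

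\emph{The key fact.} Reading a function on $t^+$ as the $\mathbb{N}$-sequence it becomes along $\tau$, the inclusion above says: for each $\tilde\omega\in\mathrm{C}^{\omega,t}(\Omega)$, writing $b_n:=\tilde\omega^*(t+n)$ (so $b_0=a=\omega'^*(t')$), there is $\rho\in\mathrm{C}^{\omega',t'}(\Omega)$ with $\rho^*(t'+n)=b_n$ for all $n$. The engine is the observation that $b_{n+1}\in i(b_n)$ for every $n$: since $\tilde\omega\in\mathrm{C}^{\tilde\omega,t+n}(\Omega)$ trivially, the iterator hypothesis gives $b_{n+1}=\tilde\omega^*(t+n+1)\in(\mathrm{C}^{\tilde\omega,t+n}(\Omega))^*(t+n+1)=i(\tilde\omega^*(t+n))=i(b_n)$.

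\emph{The construction.} I would build $\rho_0,\rho_1,\dots\in\mathrm{C}^{\omega',t'}(\Omega)$ with $\rho_n^*(t'+j)=b_j$ for all $j\le n$, each $\rho_{n+1}$ agreeing with $\rho_n$ up to $t'+n$. Put $\rho_0:=\omega'$ (so $\rho_0^*(t')=b_0$). Given $\rho_n$, note $\mathrm{C}^{\rho_n,t'+n}(\Omega)\subseteq\mathrm{C}^{\omega',t'}(\Omega)$, since agreeing with $\rho_n$ up to $t'+n\ge t'$ forces agreement with $\omega'$ up to $t'$; the iterator then gives $(\mathrm{C}^{\rho_n,t'+n}(\Omega))^*(t'+n+1)=i(\rho_n^*(t'+n))=i(b_n)\ni b_{n+1}$, so some $\rho_{n+1}$ in that set has $\rho_{n+1}^*(t'+n+1)=b_{n+1}$; as it also agrees with $\rho_n$ up to $t'+n$, it matches $b_j$ at $t'+j$ for all $j\le n+1$. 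These instances cohere — for each $s$ the values $\rho_n^*(s)$ are eventually constant — so there is a pointwise limit $\rho\in S^{E\times T}$, equal to $\omega'$ below $t'$ and to $(b_n)_n$ on $t'^+$, agreeing with every $\rho_n$ up to $t'+n$.

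\emph{The obstacle.} The remaining, and decisive, step is to show $\rho\in\Omega$; once that is known, $\rho\in\mathrm{C}^{\omega',t'}(\Omega)$ realizes $(b_n)$ and the proof is complete. The sets $\mathrm{C}^{\rho_n,t'+n}(\Omega)$ form a descending chain of non-empty subsets of $\Omega$ with intersection $\{\rho\}\cap\Omega$, so the crux is exactly that such a chain may have empty intersection: the construction realizes every finite approximation of the future inside $\Omega$ but not necessarily the limit. To close this I would look for a compactness-type property of $\Omega$ — for instance that $\Omega$ is closed in the product topology on $S^{E\times T}$ with $S$ discrete, so that the coherent sequence $(\rho_n)$ converges to a point of $\Omega$ — or else an argument that the mere existence of an iterator already forces $\Omega$ to be closed along the relevant successor chains. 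This ``$\rho\in\Omega$'' step, not the bookkeeping with $\tau$ and $\mathrm{C}$, is where the real work lies.
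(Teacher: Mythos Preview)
You have put your finger on the genuine gap, and it is not a bookkeeping issue that can be patched. Your approach is in fact the careful unwinding of the paper's own argument: the paper takes the same shift $\tau(t+n)=t'+n$ and then asserts, in one line, that equality of the one-step value-sets $(C^{\omega,t}(\Omega))^*(t+1)=(C^{\omega',t'}(\Omega))^*(t'+1)$ ``can be inductively applied'' to obtain equality of the sets of restricted futures. But equality of the value-sets at each successor level, even conditioned on matching finite prefixes, is strictly weaker than equality of the sets of whole functions on $t^+$; the passage from the former to the latter is exactly your limit step $\rho\in\Omega$, which the paper's sketch never supplies.

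The gap is real, not merely unproven: take $T=\mathbb{N}$, $E$ a singleton, $S=\{0,1,2\}$, and set $\Omega=\Omega_1\cup\Omega_2$ where $\Omega_1$ is the set of eventually-zero $\{0,1\}$-sequences and $\Omega_2=\{(2,a_1,a_2,\dots):a_i\in\{0,1\}\}$. One checks that $i(0)=i(1)=i(2)=\{0,1\}$ is an iterator (from any state, either bit can come next, regardless of which component of $\Omega$ one is in). Yet with $\omega=(0,0,0,\dots)$, $t=0$ and $\omega'=(2,0,0,\dots)$, $t'=1$ we have $\omega^*(0)=\omega'^*(1)=0$, while the two future-sets under the unique monotone bijection are, respectively, the eventually-zero $\{0,1\}$-sequences beginning with $0$ and \emph{all} $\{0,1\}$-sequences beginning with $0$. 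So $\Omega$ has an iterator but is not determinable. Your instinct is therefore correct: some closure hypothesis on $\Omega$ (e.g.\ closedness in the product topology, or closure under limits along the descending chain $C^{\rho_n,t'+n}(\Omega)$) is needed to make the theorem true, and neither your argument nor the paper's can dispense with it.
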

\begin{proof}
Let us assume that $\Omega$ has $i$. Let us also assume that $\omega^*(t)=\omega'^*(t')$. Since $i$ exists, we have that
\begin{center}
    $i(\omega^*(t))= (C^{\omega, t}(\Omega))^*(t+1)=(C^{\omega', t'}(\Omega))^*(t'+1)=i(\omega'^*(t'))$.
\end{center}
This means that
\begin{center}
    $\{\tilde\omega^*(t+1)\ |\ \tilde\omega\in\mathrm{C}^{\omega, t}(\Omega)\}=\{\tilde\omega^*(t'+1) \ |\ \tilde\omega\in\mathrm{C}^{\omega', t'}(\Omega)\}$.
\end{center}
This both proves the result for the base case of $t^+$ and $t'^+$ and can be inductively applied to the remaining elements thereof (i.e., use $i(\tilde\omega^*(t))$ to prove the result for $t+1$, inductively proving the result for $t^+$ for all $t\in T$). With $\cdot+1$, we create a monotonous bijection from $t^+$ to $t'^+$: $\forall n\in\mathbb{N}_1, \tau:t+n\mapsto t'+n$.
\end{proof}
\section{Modal Context}\label{Modal Context}

In this section, we will focus on modeling standard modal logic \cite{Blackburn2001} using contexts and show that this framework is a generalization thereof, allowing us to capture the internal aspects of a Kripke frame. We call our constructions \emph{modal contexts} because we treat each instance as a modal world, where standard modal concepts are used. Although this section focuses on standard modal logic, the use of contexts \cite{Lukethesis} also reaches dynamic logic \cite{DL} and epistemic logic \cite{EL}.

\paragraph{Modal Operators.}
From an abstract perspective, the point of using a context is to treat the states as black boxes. For this, we have to generalize the notion of applying a modal operator. We do this by creating a \emph{modal operator}, which is an injection $\square: S\rightarrow S$ such that\begin{center}
    $\forall s\in S,\forall n\in\mathbb{N}, \square^{n+2}(s)\not=\square(s)$,
\end{center} which is to say that there are no loops (e.g., $\square(\square s)\not=\square s$). Assuming $S\not=\emptyset$, this naturally requires $|S|\geq\aleph_0$. This is done so that we can encode that there is a syntactic distinction between $\square\phi$ and $\phi$.

\paragraph{Modal Context.}
\begin{definition}
    Let us consider a power context $M\subseteq\mathbb{W}=\mathrm{P}(S)^{E\times T}$. Let $\square$ and $\lozenge$ be modal operators. Let $R$ be a binary relation on $\mathbb{W}$. We say that $M$ is a modal context whenever $\forall\omega\in\mathbb{W},\omega\in M\iff$
    \begin{center}
    $((\forall e\in E,\forall t\in T,\forall s\in S,(\square s\in\omega(e, t)\iff(\forall\omega'\in\omega R\cdot,s\in\omega'(e, t))))$\\$\land$\\$(\forall e\in E,\forall t\in T,\forall s\in S,(\lozenge s\in\omega(e, t)\iff(\exists\omega'\in\omega R\cdot,s\in\omega'(e, t)))))$.
\end{center}
\end{definition}
The primary idea here is that we apply modal operators to each of the states whenever related worlds share that underlying state. As we would hope, traditional modal logic can be modeled within this. To prove this, we relate each world (modal logic) to an instance. In particular, we say that $w\in W$ relates to $w'\in M$ whenever\begin{center}
    $\forall\phi\in\Phi, (M,w\models\phi)\iff\phi\in w'(0,0)$.
\end{center}This naturally defines how we can prove statements in a modal context — To prove $\phi$, we merely need to show that it belongs to a specific set.
\begin{theorem}\label{Main}
Let $M=\langle W, R, V\rangle$ be a modal logic. Then, there is a modal context $M'\subseteq\mathbb{W}=\mathrm{P}(\Phi)^{\mathbbm{1}\times\mathbbm{1}}$ such that\begin{center}
    $\forall w\in W,\exists w'\in M',\forall\phi\in\Phi, (M,w\models\phi)\iff\phi\in w'(0,0)$.
\end{center}
\end{theorem}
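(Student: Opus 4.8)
The plan is to construct $M'$ explicitly from the Kripke model $M=\langle W,R,V\rangle$ and the satisfaction relation $\models$, and then verify it satisfies the modal-context closure condition. Since $E=T=\mathbbm{1}$ here, an instance $\omega\in\mathbb{W}=\mathrm{P}(\Phi)^{\mathbbm{1}\times\mathbbm{1}}$ is essentially just a subset of $\Phi$, namely $\omega(0,0)$. So first I would define, for each $w\in W$, the ``theory'' instance $w'$ by $w'(0,0)=\{\phi\in\Phi\mid M,w\models\phi\}$, i.e. the set of formulas true at $w$. Then I would set $M'=\{w'\mid w\in W\}$, and define the relation $R'$ on $\mathbb{W}$ (restricted to $M'$, extended arbitrarily elsewhere, say empty outside $M'$) by $u'\,R'\,v'$ iff there exist $u,v\in W$ with $u\,R\,v$ and these $w'$-images. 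Care is needed if the map $w\mapsto w'$ is not injective (two bisimilar-in-theory worlds collapsing); I would handle this by noting that if $u'=v'$ as sets then $u$ and $v$ satisfy exactly the same formulas, so replacing one by the other in the $R$-relation changes nothing that $\models$ can detect, and define $u'\,R'\,v'$ iff $\exists u,v\in W$ with $u\,R\,v$, $(\text{th}(u)=u'(0,0))$, $(\text{th}(v)=v'(0,0))$.

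The second step is to check that $M'$ with $\square,\lozenge,R'$ actually is a modal context, i.e. that for every $\omega\in\mathbb{W}$, membership $\omega\in M'$ is equivalent to the biconditional closure condition in the definition. For $\omega=w'$ with $w\in W$: the condition $\square s\in w'(0,0)\iff \forall\omega'\in w'R'\cdot,\ s\in\omega'(0,0)$ must be reconciled with the modal semantics. Here the modal operators $\square,\lozenge:S\to S$ are abstract injections on $S=\Phi$; the intended reading is that $\square s$ as an element of $S$ corresponds to the formula $\Box\chi$ when $s$ corresponds to $\chi$. So I would choose the modal operators $\square,\lozenge$ on $\Phi$ to be precisely the syntactic constructors $\chi\mapsto\Box\chi$ and $\chi\mapsto\Diamond\chi$ — these are injective and loop-free (a formula is never a proper subformula of itself), so they qualify as modal operators in the paper's sense, and the required cardinality $|\Phi|\geq\aleph_0$ holds. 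Then $\Box\chi\in\text{th}(w)\iff M,w\models\Box\chi\iff \forall v\,(wRv\Rightarrow M,v\models\chi)\iff \forall v\,(wRv\Rightarrow \chi\in\text{th}(v))$, and by the definition of $R'$ and $M'$ this is exactly $\forall\omega'\in w'R'\cdot,\ \chi\in\omega'(0,0)$ — modulo the collapse issue, which is why the ``same theory'' clause in $R'$ matters. The $\lozenge$ case is dual. Conversely, I must argue that any $\omega\in\mathbb{W}$ satisfying the closure condition already lies in $M'$; the cleanest route is to show the closure condition forces $\omega(0,0)$ to be the theory of some world — but this direction is delicate, so I would instead consider whether the definition of modal context should be read as ``$M$ is the set of all such $\omega$'', in which case I define $M'$ to be exactly that closure-defined set and separately check it contains all the $w'$ and that each $w'$ still satisfies the target biconditional with $\models$.

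The final step is to read off the conclusion: for each $w\in W$, take $w'\in M'$ as above; then by construction $\phi\in w'(0,0)\iff \phi\in\text{th}(w)\iff M,w\models\phi$, which is exactly the statement.

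I expect the main obstacle to be the ``only if'' direction of the modal-context membership condition — showing that the closure condition characterizing $M'$ does not admit spurious instances beyond the $w'$ (or, conversely, that $M'$ defined as the closure-set still realizes every $w$). This is essentially a consistency/saturation argument about which subsets of $\Phi$ can arise as $\omega(0,0)$ under the $\square/\lozenge$ coherence constraints, and it interacts subtly with the choice of $R'$ on instances outside $M'$ and with the non-injectivity of $w\mapsto\text{th}(w)$. A secondary, more bookkeeping obstacle is making sure the abstract ``modal operator'' axioms ($\square^{n+2}s\neq\square s$) are genuinely met by the syntactic $\Box$-constructor and that composing $\square$ and $\lozenge$ (for nested modalities like $\Box\Diamond\chi$) is handled uniformly, which I would address by treating $\Phi$ as the set of all well-formed modal formulas with $\Box,\Diamond$ as free term constructors.
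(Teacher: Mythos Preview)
Your construction is essentially the paper's: define $M'$ as the set of theory-instances $w'(0,0)=\{\phi\mid M,w\models\phi\}$, handle non-injectivity by quotienting $W$ under the same-theory equivalence (the paper calls this relation $Q$), set $R'$ on $M'$ via witnesses in the equivalence classes, and verify the $\square$/$\lozenge$ biconditionals directly from the Kripke semantics---the paper does exactly this, and your explicit choice of $\square,\lozenge$ as the syntactic constructors on $\Phi$ is the intended (though unstated) reading. Your flagged obstacle---the converse direction, showing that every $\omega\in\mathbb{W}$ satisfying the closure condition already lies in $M'$---is legitimate, and the paper's proof does not address it either: it checks the closure condition only for $\omega\in M'$, leaving the $\Leftarrow$ half of the membership biconditional in the modal-context definition unverified.
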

\begin{proof}
For this proof, we will focus on constructing $M'$ and show that it is a modal context.
We will first create an equivalence class over $W$. We do this by stating that \begin{center}
    $wQw'\iff(\forall\phi\in\Phi, (M,w\models\phi)\iff (M,w'\models\phi))$,
\end{center} where reflexivity, symmetry, and transitivity are all trivial to prove.
We can now define $M'$ by stating that $\forall\omega\in\mathbb{W}$,\begin{center}
    $\omega\in M'\iff(\exists[\omega']\in W/_{Q}, \omega(0,0)=\{\phi\in\Phi\ |\ M,\omega'\models\phi\})$.
\end{center} Since each function is fully defined by the set to which it maps $(0, 0)$ (all of whom are equal by class), the functions are in bijection with $W/_{Q}$. As such, we will denote the class of $W/_{Q}$ that is associated with $\omega$ by $[\omega]$.

To finish our preliminary constructions, we need to make create $R'$, which we define as\begin{center}
    $\forall\omega,\omega'\in M',\omega R'\omega'\iff (\exists a\in[\omega],\exists b\in[\omega'],aRb)$.
\end{center}

We now just need to prove that $M'$ is a modal context. We will start with the proof of $\square$, which is to say that\begin{center}
    $\forall\omega\in M',\square\phi\in\omega(0, 0)\iff(\forall\omega'\in\omega R'\cdot,\phi\in\omega'(0, 0))$.
\end{center}

We will now prove $\Rightarrow$, meaning that we will assume that $\square\phi\in\omega(0,0)$: Let $\omega\in M'$. If $\square\phi\in\omega(0,0)$, we have that\begin{center}
    $\forall\omega'\in[\omega], (M,\omega'\models\square\phi)$.
\end{center} Then, by the properties of modal logic, we have that\begin{center}
    $\forall\omega'\in[\omega],\forall\omega''\in\omega'R\cdot,(M,\omega''\models\phi)$.
\end{center} Let $\omega''\in\omega R'\cdot$, which means that $\exists\omega'\in[\omega],\exists\omega'''\in[\omega''],\omega' R\omega'''$. However, with the above, that means that $M,\omega'''\models\phi$, which means that $\phi\in\omega''(0,0)$, concluding our proof of $\Rightarrow$.

We will now prove $\Leftarrow$, meaning that we will assume that $\forall\omega'\in\omega R'\cdot,\phi\in\omega'(0, 0)$: Let $\omega'\in[\omega]$. Let $\omega''\in\omega'R\cdot$. As such, we know that $\exists\omega'''\in W/_{Q}, \omega''\in[\omega''']$ such that $\omega R'\omega'''$. Therefore, using our assumption, $\phi\in\omega'''(0,0)$, meaning that $\forall\omega''''\in[\omega'''],(M,\omega''''\models\phi)$ (in particular, $\omega''$). Since this was done generically for $\omega''\in\omega'R\cdot$, we have that $\forall\omega''\in\omega'R\cdot,(M,\omega''\models\phi)$, which means that $M,\omega'\models\square\phi$ by the properties of modal logic. As such, $\square\phi\in\omega(0,0)$, concluding our proof of $\Leftarrow$.

We now need to prove $\lozenge$, which is to say that\begin{center}
    $\forall\omega\in M',\lozenge\phi\in\omega(0, 0)\iff(\exists\omega'\in\omega R'\cdot,\phi\in\omega'(0, 0))$.
\end{center} Similarly to what was done above, we will divide this proof into two parts: $\Rightarrow$ and $\Leftarrow$.

We will now prove $\Rightarrow$, meaning that we will assume that $\lozenge\phi\in\omega(0,0)$: Let $\omega\in M'$. If $\lozenge\phi\in\omega(0,0)$, we have that\begin{center}
    $\forall\omega'\in[\omega], (M,\omega'\models\lozenge\phi)$.
\end{center} Then, by the properties of modal logic, we have that\begin{center}
    $\forall\omega'\in[\omega],\exists\omega''\in\omega'R\cdot,(M,\omega''\models\phi)$.
\end{center} Let us choose $\omega'\in[\omega]$. We now have the above $\omega''$, whose class in $W/_{Q}$ we will call $[\omega''']$; in particular, we can say that $\omega'''\in M'$ because $W/_{Q}$ is in bijection with $M'$. $\omega'''\in\omega R'\cdot$ because $\omega'R\omega''$. Since $M,\omega''\models\phi$, we have that $\phi\in\omega'''(0,0)$, which proves that $\exists\omega'''\in\omega R'\cdot,\phi\in\omega'''(0, 0)$, concluding our proof of $\Rightarrow$.

We will now prove $\Leftarrow$, meaning that we will assume that $\exists\omega'\in\omega R'\cdot,\phi\in\omega'(0, 0)$: If $\omega'\in\omega R'\cdot$, then $\exists a\in[\omega],\exists b\in[\omega'],aRb$, and since $b\in[\omega']$, we have that $M,b\models\phi$, which means that $M,a\models\lozenge\phi$. As such, $\lozenge\phi\in\omega(0,0)$, concluding our proof of $\Leftarrow$.

At last, we have proven that $M'$ is a modal context whose instances capture the worlds of $M$.
\end{proof}

An interesting note about the above theorem is that we can faithfully capture modal logic without using labels for worlds. Moreover, this approach generically works for any power context.

\section{Conclusion and Future Work}\label{Conclusion}

Contexts allow us to create by restriction — We choose a function set that is large enough and choose a subset to fit the situation. This perspective is in contrast to many standard approaches, such as a Kripke frame, where we might have to define the underlying structure and carefully add rules or components. This approach might not be adequate for all situations, but it may allow for other views on existing phenomena.

We have shown that the framework can be used to abstractly model structures without worrying about many details (such as what was done with chess and determinability). We have also shown that we can encode standard modal logic into a particularization of the framework, with Theorem \ref{Main} to constructively convert the standard axioms of modal logic.

However, this implementation of modal logic does not use $E$ and $T$, which are primary points of interest for contexts; in fact, this practically just makes the context a set. This is because it is generally doubtful what $E$ and $T$ would mean unless we are discussing epistemic or temporal logic. We would like to investigate extensions of this definition that are also consistent with $E$ and $T$, which, in the case of epistemic logic, is to say that the instance mapping of $e\in E$ should correspond to, for example, its knowledge as a power set, such as \begin{center}
    $\omega(\mathrm{Alice},t)=\{\text{"Bob bought a house."},\text{"Bob knows that Alice knows that Bob bought a house."},\dots\}$.
\end{center} In \cite{Lukethesis}, we have an alternative definition to solve this for epistemic logic, though it is rather independent of how modal contexts were defined here. Temporal logic will also be approached, as the properties of $T$ being a total order are typically sufficient for many common temporal systems.

Much of what has been presented here forms the basics of this framework, and there are many unanswered questions that we would like to explore in future work. This includes the following:
\begin{itemize}
    \item Study function sets with broader kinds of domains, not just $E$ and $T$. In particular, studying how the properties of the domain (such as $T$ being a total order) affect the resulting context is the main interest.    
    \item Model epistemic \cite{EL}/doxastic \cite{doxastic}/temporal \cite{TL} /dynamic \cite{DL} logic in a way that is coherent with $E$ and $T$, meaning that each $e\in E$ and $t\in T$ should correspond to their modal counterparts. This may possibly require other constructions of modal contexts.
    \item Define bisimulations and modal invariance for modal contexts.
    \item Find the conditions under which expected properties (e.g., finite model property and decidability) still hold and techniques (e.g., filtration) can be applied.
\end{itemize}

\paragraph{Acknowledgments.}

This work is supported by FCT – Fundação para a Ciência e a Tecnologia through projects UIDB/04106/2025 at CIDMA and by National and European Funds through SACCCT- IC\&DT - Sistema de Apoio à Criação de Conhecimento Científico e Tecnológico, as part of COMPETE2030, within the project BANSKY with reference number 15253.


\end{document}